\newtheorem{theorem}{Theorem}[section]
\newtheorem{corollary}[theorem]{Corollary}
\newtheorem{lemma}[theorem]{Lemma}
\title{New Results On Routing Via Matchings On Graphs}
\author{Avah Banerjee, Dana Richards}
\newcommand{\FormatAuthor}[2]{
\begin{tabular}{c}
#1 \\ {\small #2}
\end{tabular}
}
\author{
\begin{tabular}[h!]{lcr}
   \FormatAuthor{Avah Banerjee}{George Mason University}
&   \FormatAuthor{Dana Richards}{George Mason University}
\end{tabular}
}
\begin{document}

\maketitle

\begin{abstract}
In this paper we present some new complexity results on  the routing time of a graph under the \textit{routing via matching} model. 
This is a parallel routing model which was introduced by Alon et al\cite{alon1994routing}. 
The model can be viewed as a communication scheme on a distributed network. 
The nodes in the network can communicate via matchings (a step), where a node exchanges data (pebbles) with its matched partner.
Let $G$ be a connected graph with vertices labeled from $\{1,...,n\}$ and the destination vertices of the pebbles are given by a permutation $\pi$.
The problem is to find a minimum step routing scheme for the input permutation $\pi$.
This is denoted as the routing time $rt(G,\pi)$ of $G$ given $\pi$. 
In this paper we characterize the complexity of some known problems under the routing via matching model and discuss their relationship to graph connectivity and clique number.
We also introduce some new problems in this domain, which may be of
independent interest.

\end{abstract}

\section{Introduction}
Originally introduced by Alon and others \cite{alon1994routing} the routing via matching model explores a parallel routing problem on connected undirected graphs.
Consider a undirected labeled graph $G$.
Each vertex of $G$ contains a pebble with the same label as the vertex.
Pebbles move along edges through a sequence of swaps.
A set of swaps (necessarily disjoint) that occurs concurrently is called a step.
This is determined by a matching.
A permutation $\pi$ gives the destination of each pebble. 
That is, the pebble $p_v$ on vertex $v$  is destined for the vertex $\pi(v)$.
The task is to route each pebble to their destination via a sequence of matchings. 
 The \textit{routing time} $rt(G,\pi)$ is defined as the minimum number of steps necessary to route all the pebbles for a given permutation.
 The \textit{routing number} of $G$, $rt(G)$, is defined as the maximum routing time over all permutation. Let $G=(V,E)$, $m=|E|$ and $|G|=n=|V|$.
 
 Determining the routing time is a special case of the \textit{minimum generator sequence} problem for groups.
 In this problem instead of a graph we are given a permutation group $\mathcal{G}$ and a set of generators $S$.
 Given a permutation $\pi \in \mathcal{G}$ the task is to determine if there exists a generator sequence of length $\le k$ that generates $\pi$  from the identity permutation.
 It was first shown to be $ \mathsf{NP} $-hard by Evan and Goldreich \cite{even1981minimum}.
 Later Jerrum \cite{jerrum1985complexity} showed that it is in fact $\mathsf{PSPACE}$-complete, even when the generating set is restricted to only two generators.
 
 The serial version,
 where swaps takes place one at a time, is also of interest.
 This has recently garnered interest after its introduction 
 by Yamanaka and others \cite{yamanaka2015swapping}.
 They have termed it the \textit{token swapping problem}. 
 This problem is also $\mathsf{NP} $-complete as shown by Miltzow and others \cite{miltzow2016approximation} in a recent paper.
  Where the authors prove token swapping problem is hard to approximate within $(1+ \delta)$ factor. 
  They also provide a simple 4-approximation scheme for the problem.
  A generalization of the token swapping problem (and also the permutation routing problem) is the colored token swapping problem \cite{yamanaka2015swapping,kawahara2016time}. 
 In this model the vertices and the tokens are partitioned into equivalence classes (using colors) and the goal is to route all  pebbles in such a way that each pebble ends up in some vertex with the same class as the pebble. 
 If each pebble (and vertex) belong to a unique class then this problem reduces to the original token swapping problem.
 This problem is also proven to be $\mathsf{NP} $-complete by Yamanaka and others \cite{yamanaka2015swapping} when the number of colors is at least 3.
 The problem is polynomial time solvable for the bi-color case.
\subsection{Prior Results}
  Almost all previous literature on this problem focused on determining the routing number of typical graphs.  
  In the introductory paper, Alon and others \cite{alon1994routing} show that for any connected graph $G$, $rt(G) \le 3n$. 
  This was shown by considering a spanning tree of $G$ and using only the edges of the tree to route  permutations in $G$. 
  Note that, one can always route a permutation on a tree, by iteratively moving a pebble that belongs to some leaf node and ignoring the node afterward. 
  The routing scheme is recursive and uses an well known property of trees: a tree has a centroid (vertex) whose removal results in a forest of trees with size at most $n/2$.
Later Zhang and others \cite{zhang1999optimal} improve this upper bound to $3n/2 + O(\log n)$.
 This was done using a new decomposition called the caterpillar decomposition. 
 This bound is essentially tight as it takes $\lfloor{3(n-1)/2}\rfloor$ steps to route some permutations on a star $K_{1,n-1}$. 
 There are also some known results for routing numbers of graphs besides trees. 
 We know that for the complete graph and the complete bipartite graph the routing number is 2 and 4 respectively \cite{alon1994routing}.
 Where the latter result is attributed to W. Goddard.
  Li and others \cite{li2010routing} extend these results to show $rt(K_{s,t}) = \lfloor 3s/2t \rfloor + O(1)$ ($s \ge t$). 
 For the $n$-cube $Q_n$ we know that $n+1 \le rt(Q_n) \le 2n-2$. The lower bound is quite straightforward. The upper bound was discovered by determining the routing number of the Cartesian product of two graphs \cite{alon1994routing}.
  If $G = G_1  \square  G_2$ be the Cartesian product of $G_1$ and $G_2$ then:
   $$rt(G) \le 2 \min(rt(G_1),rt(G_2))+\max(rt(G_1),rt(G_2))$$
Since $Q_n = K_2 \square Q_{n-1}$ the result follow.\footnote{The base case, which computes $rt(Q_3)$ was determined to be 4 via a computer search\cite{li2010routing}}.

\subsection{Our Results}
In this paper we present several complexity results for the routing time problem and some variants of it. We summarize these results below.

\begin{enumerate}[series = inform, leftmargin =\parindent, labelwidth =0pt, listparindent = \parindent]
\item[] Complexity results on routing time:
\end{enumerate}

\begin{enumerate}[labelindent =\parindent, align = left, labelwidth =1em, leftmargin =! , labelsep =0.2 em, series = quest]

    \item If $G$ is at least bi-connected then determining whether $rt(G, \pi) = k$ for any arbitrary permutation and $k > 2$ is NP-complete.
    \footnote{After publication of our results to arXiv (\cite{banerjee2016routing}) a similar result was independently discovered in the context of parallel token swapping by Kawahara and others \cite{kawahara2017time}.}
    
    \item For any graph, determining $rt(G, \pi) \le 2$ can be done in polynomial time, for which we give a $O(n^{2.5})$ time algorithm.
    
   \item As a consequence of our $\mathsf{NP}$-completeness proof of the routing time we show that the problem of determining a minimum sized partitioning scheme of a colored graph such that each partition induces a connected subgraph is $\mathsf{NP}$-complete.
	
\item We introduce a notion of approximate routing called \textit{maximum routability} for a graph and give an approximation algorithm for it.

\end{enumerate}

\begin{enumerate}[series = inform, leftmargin =\parindent, labelwidth =0pt, listparindent = \parindent]
\item[] Structural results on routing number:
\end{enumerate}

\begin{enumerate}[resume* = quest]
	\item If $G$ is $h$-connected then $G$ has a routing number  of $O(nr_G)$. Here $r_G =  \min rt(G_h)/|G_h|$, over all induced connected subgraphs $|G_h| \le h$.
	\item A connected graph with a clique number of $\kappa$ has a routing number of $O(n -\kappa)$.
\end{enumerate}

Routing on general graphs is a natural question and the swapping model is a natural model in synchronous networks.
Our results are some of the first to address these models when 
the graph has certain topological properties.  
Connectivity properties are basic, especially for network algorithms.
While the hope is to have matching upper and lower bounds for, say, $h$-connected graphs,
we give new algorithms and techniques towards that end.

\section{Computational Results}
\subsection{An $O(n^{2.5})$ time Algorithm for Determining If $rt(G,\pi) \le 2$}
In this section we present a polynomial time deterministic algorithm to compute a two step routing scheme if one exists.
 It is trivial to determine whether $rt(G,\pi) = 1$.
 Hence, we only consider the case if $rt(G,\pi) > 1$.
 The basic idea centers around whether we can route the individual cycles of the permutation within 2 steps. 
 Let $\pi = \pi_1\pi_2\ldots \pi_k$ consists of $k$ cycles and $\pi_i = (\pi_{i,1}\ldots \pi_{i,a_i})$, where $a_i$ is the number of elements in $\pi_i$.
 A cycle $\pi_i$ is identified with the vertex set $V_i \subset V$ whose pebbles need to be routed around that cycle.
  We say a cycle $\pi_i$ is  \textit{self-routable} if it can be routed on the induced subgraph $G[V_i]$ in 2 steps. 
  
If all cycles were self-routable we would be done, so suppose that there is a cycle $\pi_i$ that 
needs to match across an edge between it and another cycle $\pi_j$.
Let $G[V_i,V_j]$ be the induced bipartite subgraph corresponding to the two sets $V_i$ and $V_j$.

\begin{figure}[h]
	\includegraphics[width=3.5cm]{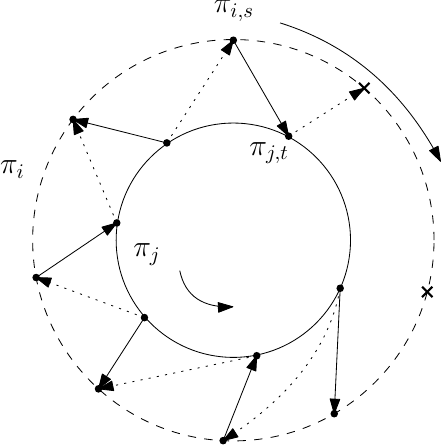}
	\centering
	\caption{The two cycles are shown as concentric circles. The direction of rotation for the outer circle is clockwise and the inner circle is counter-clockwise. Once, we choose $(\pi_{i,s},\pi_{j,t})$ as the first matched pair, the rest of the matching is forced. Solid arrows indicate matched vertices during the first round. Note that the cycles are unequal and the crossed vertices in the figure will not be routed.} 
\end{figure}

\begin{lemma}
	If $\pi_i$  is not self-routable and it is routed with an edge from $V_i$ to $V_j$ then $\pi_i$ and $\pi_j$ are both routable in 2 steps when all of the edges used are from $G[V_i,V_j]$ and
when $|V_i|=|V_j|$.
\end{lemma}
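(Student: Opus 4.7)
The plan is to show that any two-step routing of $\pi_i$ using only bipartite edges forces a very rigid structure on the two matchings, and that this rigid structure automatically discharges $\pi_j$ as well. Label the two cycles in cyclic order $V_i = \{a_0, \dots, a_{m-1}\}$ and $V_j = \{b_0, \dots, b_{m-1}\}$, where $m = |V_i| = |V_j|$, so that $\pi_i$ sends $p_{a_k} \mapsto a_{k+1}$ and $\pi_j$ sends $p_{b_l} \mapsto b_{l+1}$ (indices mod $m$).

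First I would argue that the two matchings $M_1$ and $M_2$ are perfect matchings between $V_i$ and $V_j$. Because every used edge lies in $G[V_i,V_j]$, each pebble alternates sides with each step. Since $p_{a_k}$ must end at $a_{k+1} \neq a_k$ it cannot remain stationary, so it must cross to $V_j$ in step $1$ and return to $V_i$ in step $2$. Hence $M_1$ saturates $V_i$ using only cross edges, and together with $|V_i| = |V_j|$ it is a perfect matching onto $V_j$; the same holds for $M_2$. This step is where the equal-size hypothesis is essential.

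Next I would parameterize $M_1 = \{(a_k, b_{f(k)}) : k \in \Z/m\}$ by a bijection $f$ and show that $M_2$ is forced by $M_1$. Since $p_{a_k}$ sits at $b_{f(k)}$ after step $1$ and must reach $a_{k+1}$ after step $2$, $M_2$ must contain the edge $(b_{f(k)}, a_{k+1})$ for every $k$. These $m$ edges form a perfect matching (both $f(k)$ and $k+1$ are bijective in $k$), so $M_2 = \{(b_{f(k)}, a_{k+1})\}_k$ is completely determined by $f$. At this stage the routing of $\pi_i$ is satisfied for \emph{any} bijection $f$; the interesting constraint appears when we ask what happens to the $V_j$-pebbles.

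I would then track $p_{b_l}$: after $M_1$ it sits at $a_{f^{-1}(l)}$, and by the formula for $M_2$ it moves to $b_{f(f^{-1}(l)-1)}$. Setting this equal to the destination $b_{l+1}$ and substituting $l = f(k)$ gives the recurrence $f(k-1) = f(k)+1 \pmod{m}$, whose only solutions are the antipodal maps $f(k) = c - k$ with $c = f(0)$. Reading the calculation as an iff, a direct substitution verifies that every such $f$ simultaneously routes both cycles by $M_1$ and $M_2$. This is exactly the concentric counter-rotation depicted in the figure: fixing the initial matched pair $(a_s, b_t)$ determines $c = s+t$, and the remainder of both matchings is then forced.

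The main obstacle, and the place where I would spend the most care, is the step above that passes from ``$\pi_i$ is routed'' to ``$\pi_j$ is also routed.'' A priori the $V_j$-pebbles are merely along for the ride, so one needs the rigidity argument (that $M_2$ is entirely determined by $M_1$ and that the consistency equation for $V_j$ pins $f$ to antipodal form) to conclude that no ``wasted'' freedom exists. Once this rigidity is established, the lemma follows: the hypothesis provides a valid $M_1, M_2$ within $G[V_i, V_j]$, the derivation forces $f$ into the one-parameter antipodal family, and the antipodal family routes $\pi_i$ and $\pi_j$ together.
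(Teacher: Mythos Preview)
Your algebraic forcing argument is exactly the mechanism behind the paper's proof (which is little more than a pointer to Figure~1): fix one cross edge in the first matching and the rest of both matchings is determined, yielding the ``concentric counter-rotation'' pattern you call antipodal. So the core idea matches.

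Where you and the paper diverge is in what counts as hypothesis versus conclusion. You read the two ``when'' clauses as \emph{assumptions} --- you begin by supposing every used edge lies in $G[V_i,V_j]$ and that $|V_i|=|V_j|$ --- and then try to deduce that $\pi_j$ is routed. The paper reads them the other way: the ambient hypothesis is that we already have a two-step routing of the whole permutation (so $\pi_j$ is routed from the outset), and the content of the lemma is that once $\pi_i$ uses even a single cross edge to $V_j$, the forcing propagates around both cycles and shows (i) every edge touching $V_i\cup V_j$ is bipartite, and (ii) the pattern closes up only when $|V_i|=|V_j|$. That is why the paper says ``unless $|V_i|=|V_j|$, the pattern will fail.''

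Under your reading there is a genuine circularity in the last paragraph: you derive the constraint $f(k-1)=f(k)+1$ by \emph{setting} the final position of $p_{b_l}$ equal to $b_{l+1}$, i.e.\ by assuming $\pi_j$ is routed, and then use the resulting antipodal $f$ to conclude that $\pi_j$ is routed. Nothing in ``$\pi_i$ is routed with bipartite edges'' alone pins down $f$; as you yourself note, any bijection $f$ routes $\pi_i$. The fix is simply to adopt the paper's framing: take the full routing (hence $\pi_j$ routed) as given, start from a single cross edge $(\pi_{i,s},\pi_{j,t})\in M_1$, and run your same recurrence alternately using $\pi_i$- and $\pi_j$-constraints. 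This forces all of $M_1,M_2$ to be cross edges in the antipodal pattern, and the wrap-around is consistent precisely when $|V_i|=|V_j|$. Your computations already do all the work; only the logical scaffolding needs to be reoriented.
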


\begin{proof}
	We prove this assuming $G$ is a complete graph. Since for any other case the induced subgraph $G[V_i\cup V_j]$ would have fewer edges, hence this is a stronger claim. 
Let the cycle $\pi_i=(\pi_{i,1},\ldots,\pi_{i,s},\ldots,\pi_{i,|V_i|})$. 
If there is an edge used between the cycles then there must be such an edge in the first step,
since pebbles need to cross from one cycle to another and back. 
Assume $\pi_{i,s}$ is matched with $\pi_{j,t}$ in the first step. From Figure 1 we see that the crossing pattern is forced, and unless $|V_i|=|V_j|$, the pattern will fail.
\end{proof}

\noindent
A pair of cycles $\pi_i,\pi_j$ is \textit{mutually-routable} in the case described by Lemma 1.
Naively verifying whether a cycle $\pi_i$ is self-routable, or a pair $(\pi_i,\pi_j)$ is 
mutually-routable takes $O(|V_i|^2)$ and $O((|V_i| + |V_j|)^2)$ time respectively. 
However, with additional bookkeeping we can compute this in linear time on the size of the induced graphs. This can be done by considering the fact that no edge can belong to more than one routing scheme on $G[V_i]$ or on $G[V_i,V_j]$. Hence the set of edges are partitioned by the collection of 2 step routing schemes. Self-routable schemes, if they exist, are forced by the choice of any edge to be in the first step; no edge is forced by more than four initial choices, leading to a test that runs in time proportional in $|G[V_i]|$.  
Mutually-routable schemes, if they exist, are one of $|V_i|$ ($=|V_j|$) possible schemes; each edge votes for a scheme and it is routable if a scheme gets enough votes, leading to a test that runs in time proportional in $|G[V_i,V_j]|$.
All the tests can be done in $O(m)$ time.

We define a graph $G_{cycle} = (V_{cycle} ,E_{cycle})$ whose vertices are the cycles ($V_{cycle} = \{\pi_i\}$) and two cycle are adjacent iff they are mutually-routable in 2 steps. 
Additionally, $G_{cycle}$ has loops corresponding to vertices which are self-routable cycles. 
We can modify any existing maximum matching algorithm to check whether $G_{cycle}$ has a perfect matching (assuming self loops) with only a linear overhead. 
We omit the details. 
Then the next lemma follows immediately:
\begin{lemma}
	$rt(G,\pi) = 2$ iff there is a perfect matching in $G_{cycle}$.
\end{lemma}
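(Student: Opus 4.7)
My plan is to prove both directions of the equivalence separately.

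For the easy direction ($\Leftarrow$), I would take a perfect matching $\mu$ in $G_{cycle}$ (treating a chosen self-loop as ``matching a vertex to itself'') and build the routing by gluing. For each self-looped cycle $\pi_i$ in $\mu$, its self-routability furnishes matchings $M_1^{(i)}, M_2^{(i)}$ on $G[V_i]$ that route $\pi_i$ in two steps; for each edge $\{\pi_i,\pi_j\}$ of $\mu$, mutual-routability furnishes matchings $M_1^{(ij)}, M_2^{(ij)}$ on $G[V_i,V_j]$ that route $\pi_i$ and $\pi_j$ together. Because the vertex sets of distinct cycles of $\pi$ are pairwise disjoint, the unions $M_1 := \bigcup M_1^{(\cdot)}$ and $M_2 := \bigcup M_2^{(\cdot)}$ are valid matchings, and together they route $\pi$. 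Combined with the standing assumption $rt(G,\pi)>1$, this gives $rt(G,\pi)=2$.

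For the forward direction ($\Rightarrow$), assume we are handed a routing $(M_1,M_2)$ with $rt(G,\pi)=2$. I would assign each cycle of $\pi$ either a self-loop or an edge in $G_{cycle}$ according to the following dichotomy. For a fixed non-trivial cycle $\pi_i$, consider the set $E_i$ of $M_1\cup M_2$ edges incident to $V_i$. If every edge of $E_i$ lies inside $V_i$, then $(M_1{\restriction}_{V_i}, M_2{\restriction}_{V_i})$ is a 2-step routing of $\pi_i$ on $G[V_i]$, so $\pi_i$ is self-routable and I assign its self-loop. Otherwise, I pick any cross-edge, say $\{v,w\}\in M_1$ with $v\in V_i$ and $w\notin V_i$, and show that $w$ lies in a unique cycle $V_j$, $|V_i|=|V_j|$, and $\pi_i,\pi_j$ are mutually-routable; then I assign the edge $\{\pi_i,\pi_j\}$. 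Trivial (fixed-point) cycles simply take their self-loops.

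The heart of the forward direction is propagating the ``forced crossing pattern'' from the proof of Lemma 1 into this general setting. Starting from $\{v,w\}\in M_1$, I would argue: since every vertex in a non-trivial cycle must be matched in both $M_1$ and $M_2$ (its pebble must leave in step 1 and a different pebble must arrive in step 2), the trajectories of $p_v$ and $p_w$ force $\{w,\pi(v)\},\{v,\pi(w)\}\in M_2$, and tracking the pebble $p_{\pi^{-1}(w)}$ that must arrive at $w$ then forces $\{\pi^{-1}(w),\pi(v)\}\in M_1$. Iterating this around $\pi_i$ yields $\pi^k(v)\leftrightarrow \pi^{-k}(w)$ in $M_1$ for every $k$; wrapping around $V_i$ after $|V_i|$ steps forces $\pi^{-|V_i|}(w)=w$, i.e., $|V_j|\mid |V_i|$, and running the same argument starting from $V_j$ gives $|V_i|\mid |V_j|$, hence $|V_i|=|V_j|$. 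In particular, every vertex of $V_i$ is matched in $M_1$ and $M_2$ only to vertices of $V_j$, which rules out cross-edges to any third cycle and rules out the ``mixed'' case of some within-$V_i$ and some cross edges; this is what makes the assignment well-defined and yields the perfect matching.

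The main obstacle is the book-keeping in this last step: one must be careful that the induction around the cycle does not break down at the ``meeting point'' after $|V_i|$ iterations and that all four possible starting cases (cross-edge in $M_1$ vs.\ $M_2$, and from either side) collapse to the same rotating structure described in Figure~1. Once that is in hand, the forward direction is a direct consequence, because the resulting assignment covers every cycle exactly once by either a self-loop or a mutual-routing edge.
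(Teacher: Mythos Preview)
Your proposal is correct and is exactly the argument the paper has in mind: the paper simply declares that the lemma ``follows immediately'' from Lemma~1 and the definition of $G_{cycle}$, and what you have written is the natural unpacking of that sentence. In particular, your forward direction---propagating the forced crossing pattern of Lemma~1 around the cycle to show that any 2-step routing decomposes the cycles into self-routed singletons and mutually-routed equal-length pairs---is precisely the content the paper leaves implicit.
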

\noindent The graph $G_{cycle}$ can be constructed in $O(m)$ time by determining self and mutual routability of cycles and pair of cycles respectively. 
Since we have at most $k$ cycles,  $G_{cycle}$ has $\le 2k$ vertices and thus $O(k^2)$ edges. 
Hence we can determine a maximum matching in $G_{cycle}$ in  $O(k^{2.5})$ time \cite{micali1980v}. 
This gives a total runtime of $O(n + m + k^{2.5})$ for our algorithm to find a 2-step routing scheme of a connected graph if one exists.
\begin{corollary}
	$rt(G) = 2$ iff $G$ is a clique.
\end{corollary}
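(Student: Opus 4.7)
The plan is to prove both directions using Lemma 2 together with a carefully chosen bad permutation for the harder direction.

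For the easy direction (clique implies $rt(G)=2$), I would simply invoke the classical result $rt(K_n)=2$ of Alon, Chung and Graham \cite{alon1994routing}; the matching lower bound $rt(K_n)\ge 2$ for $n\ge 3$ is immediate because a single matching realizes only an involution while $K_n$ admits non-involutions (say a 3-cycle on any three vertices).

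The substantive direction is the contrapositive: if $G$ is not complete, then $rt(G)>2$. My plan is to pick any two non-adjacent vertices $u,v$ and let $\pi$ be the transposition swapping $u$ with $v$ while fixing every other vertex. Its cycle decomposition consists of a single 2-cycle $\pi_1=(u,v)$ on $V_1=\{u,v\}$ together with $n-2$ singleton cycles. By Lemma 2, $rt(G,\pi)=2$ would require a perfect matching in $G_{cycle}$, and in particular $\pi_1$ would have to be matched either via a self-loop or to some other vertex of $G_{cycle}$. A self-loop demands that $\pi_1$ be self-routable on $G[V_1]$, but $G[V_1]$ is edgeless, so the only available matching there is the empty one, whose two-fold composition is the identity rather than the swap. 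Mutual routability, by Lemma 1, demands a partner cycle of the same size $|V_1|=2$, and no such cycle exists since every other $\pi_j$ is a singleton. Hence $\pi_1$ cannot be covered in $G_{cycle}$, so $rt(G,\pi)\ge 3$ and therefore $rt(G)>2$.

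I do not anticipate a real obstacle; once Lemma 2 is in hand, the argument is a two-case exclusion on how the lone 2-cycle could possibly be matched, both cases being ruled out immediately by the absence of the edge $uv$. The only small point to verify is that a 2-cycle on two non-adjacent vertices really fails self-routability, which reduces to the observation that the induced subgraph on two isolated vertices admits only the empty matching.
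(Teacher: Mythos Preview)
Your proposal is correct and follows essentially the same route as the paper: cite \cite{alon1994routing} for $rt(K_n)=2$, and for the converse pick a non-edge $(i,j)$ and argue that the transposition $(i\,j)$ cannot be routed in two steps. The only cosmetic difference is that you organize the non-clique direction through Lemma~2 (no perfect matching in $G_{cycle}$, splitting into the self-routable and mutually-routable cases), whereas the paper simply appeals to Lemma~1 in one line; the underlying argument is identical.
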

\begin{proof}
($\Rightarrow$ ) A two step routing scheme for $K_n$ was given in \cite{alon1994routing}.\\
($\Leftarrow $) If $G$ is not a clique then there is at least a pair of non-adjacent vertices. Let $(i,j)$ be a non-edge. Then by Lemma 1 the permutation $(ij)(1)(2)\ldots(n)$ cannot be routed in two steps.
\end{proof}

\subsection{Determining $rt(G,\pi) \le k$ Is Hard for Any $k \ge 3$}

\begin{theorem}
	For $k \ge 3$ computing $rt(G,\pi)$ is $\mathsf{NP}$-complete.
\end{theorem}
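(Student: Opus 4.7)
The plan is to establish NP membership by a direct certificate argument, then prove NP-hardness for $k=3$ by a reduction from a classical hard problem, and finally lift to arbitrary $k \ge 3$ by padding.

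For membership, a sequence of $k$ matchings $M_1, \ldots, M_k$ of $G$ is a polynomial-size witness: given it, one checks each $M_i$ is a matching and simulates the swaps to verify that the composite permutation equals $\pi$. Since $k$ never needs to exceed $O(n)$ for a meaningful instance (by the $3n$ upper bound of Alon et al.), the certificate has polynomial size.

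For hardness with $k=3$, the approach is to reduce from an NP-hard combinatorial problem (3-SAT, 3-coloring, or 3-dimensional matching are all natural candidates) by constructing $(G,\pi)$ so that a 3-step routing is forced to encode a solution. The leverage comes from Lemma 1: a non self-routable cycle can only be cleared in 2 steps via a mutually-routable partner of equal size, and pushing this logic to 3 steps produces very rigid constraints on which cycles can share which edges in which step. The plan is to design gadgets in which the active vertices of $\pi$ form a disjoint union of cycles whose only possible 3-step ``escape routes'' correspond to the choices (variable assignments, color classes, or matched triples) of the source instance. Frozen vertices (fixed by $\pi$) are used as structural scaffolding that restricts what matchings are available without themselves needing to move. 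Because the corollary the authors announce is the hardness of partitioning a colored graph into a minimum number of connected parts, a very natural reduction flows through that intermediate problem: the 3-step routing is feasible iff the cycles of $\pi$ can be grouped into a small number of connected ``batches,'' each batch handled by one of the three matchings. This gives both the theorem and the stated corollary simultaneously.

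To extend from $k=3$ to arbitrary $k \ge 3$, attach to $(G,\pi)$ a rigid gadget $(H,\sigma)$ whose routing number is exactly $k$ and whose pebble exchanges cannot be accelerated by borrowing edges from $G$ (for instance, attach $H$ by a single cut vertex that is fixed by both $\pi$ and $\sigma$, or use a path/star whose routing time is known to be $k$). Then $rt(G \cup H, \pi \cdot \sigma) \le k$ iff $rt(G,\pi) \le 3$, giving hardness at every fixed $k \ge 3$.

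The main obstacle will be the gadget design for the $k=3$ case: one must simultaneously guarantee completeness (any solution to the source instance lifts to three concrete matchings routing $\pi$) and soundness (any 3-step routing, even one that exploits edges across gadgets in unintended ways, yields a valid solution to the source instance). The rigidity forced by Lemma 1 and its 3-step analogue is what makes soundness tractable, but also what makes completeness delicate, since every cycle of $\pi$ must be timed to finish in exactly the right step without blocking its neighbors. Carefully choosing cycle lengths (e.g., making all relevant cycles have the same even length) and enforcing that cross-gadget edges are too few to be useful are the likely tools for closing this gap.
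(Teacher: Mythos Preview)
Your proposal is a plan rather than a proof, and the plan misidentifies where the leverage comes from. Lemma~1 concerns 2-step routing: it says two cycles routed together in two steps must have equal length and must use a rigid zigzag on $G[V_i,V_j]$. That rigidity evaporates at three steps, since a pebble now has time to traverse a path of length three and the space of legal routings becomes large; there is no general ``3-step analogue of Lemma~1'' to invoke. Your batching picture is also off: a 3-step routing is three global matchings acting in sequence, each of which may touch every cycle of $\pi$, not three disjoint batches of cycles handled one per step. So the intermediate ``connected colored partition'' problem does not sit between 3-SAT and routing in the way you suggest; in the paper it is a \emph{separate} corollary proved by reusing the same gadgets, not a stepping stone to the routing hardness.

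The paper's reduction is from 3-SAT and is built from small concrete gadgets. The key atom is a hexagon $H$ carrying a single 2-cycle $(a,b)$ on antipodal vertices: routing $(a,b)$ in three steps forces both pebbles to travel entirely along the left 3-path or entirely along the right 3-path, never mixing. This binary choice encodes a truth value. A variable gadget is a ring of such hexagons linked by ``diamond chains'' so that all hexagons in the gadget are forced to make the same left/right choice. A clause gadget is a three-path structure where routing the clause's 2-cycle in three steps requires exclusive use of one of three literal vertices; that vertex is free precisely when the corresponding variable gadget chose the assignment that releases it. Soundness comes not from any global cycle-pairing lemma but from local case analysis on these tiny gadgets, each small enough that its 3-step routings can be enumerated by hand.

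What you are missing is exactly this gadget layer. Without it there is no reduction, and the ``main obstacle'' you flag at the end is not a detail to be filled in later but the entire content of the hardness argument. Your padding idea for lifting $k=3$ to larger $k$ is fine in spirit (and the paper itself does not spell that step out), but it only becomes relevant once the $k=3$ case is actually established.
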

\begin{proof}
Proving it is in $ \mathsf{NP} $ is trivial, we can use a set of matchings as a witness. We give a reduction from 3-$\mathsf{SAT}$.
We first define three \textit{atomic} gadgets (see Figure 2) which will be use to construct the variable and clause gadgets. Vertices whose pebbles are fixed (1-cycles) are represented as red circles.  Otherwise they are represented as black dots. So in the first three sub-figures ((a)-(c)) the input permutation is $(a,b)$\footnote{We do not write the 1 cycles explicitly as is common.}. In all our constructions we shall use permutations consisting of only 1 or 2 cycles. Each cycle labeled $i$ will be represented as the pair $(a_i,b_i)$. If the correspondence between a pair is clear from the figure then we shall omit the subscript.  It is an easy observation that $rt(P_{3},((a,b))) = rt(P_{4},((a,b))) = rt(H,((a,b))) = 3$. In the case of the hexagon $H$ we see that in order to route the pebbles within 3 steps we have to use the left or the right path, but we cannot use both paths simultaneously (i.e., $a$ goes along the left path but $b$ goes along the right and vice-versa). Figure 2(e) shows a chain of diamonds connecting $u$ to $v$. Where each diamond has a 2-cycle, top and bottom. If vertex $u$ is used to route any pebble other than the two pebbles to its right then the chain construction forces $v$ to be used in routing the two pebbles to its left. This chain is called a \textit{diamond-chain}. In our construction we only use chains of constant length to simplify the presentation of our construction. 

\begin{figure}[h]
	\includegraphics[width=11cm]{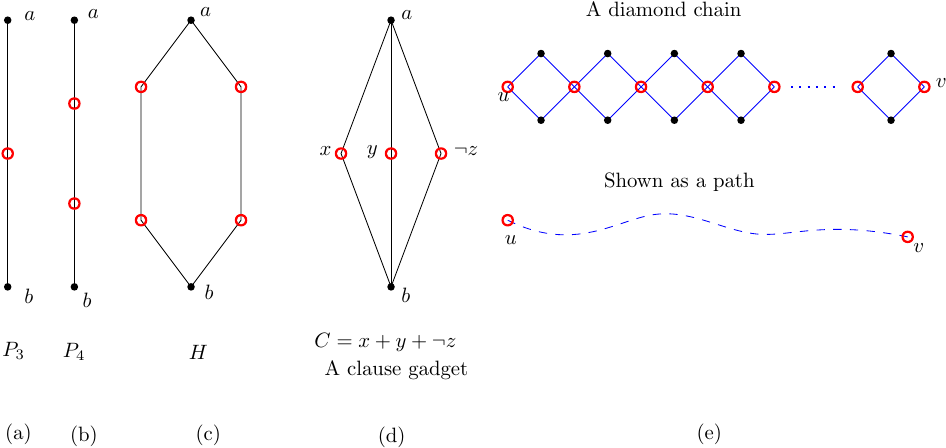}
	\centering
	\caption{Atomic Gadgets, pairs $ (a,b) $ need to swap their pebbles. The unmarked red circles have pebbles that are fixed.} 
\end{figure}

	\begin{figure}[h]
		\includegraphics[width=11cm]{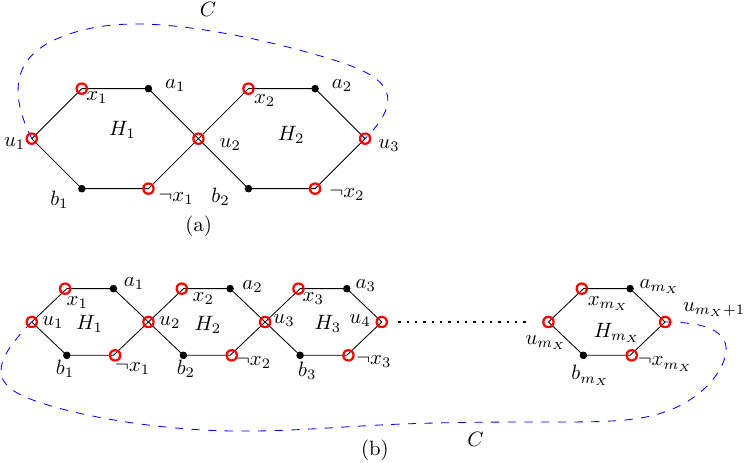}
		\centering
		\caption{Variable graph of $X$. (a) is a special case for $m_X=2$, (b) is the general case.} 
	\end{figure}
	\subsubsection{Clause Gadget:} Say we have clause $C = x \vee y \vee \neg z$. In Figure 2(d) we show how to create a clause gadget. This is referred to as the \textit{clause graph} $G_C$ for the clause $C$. The graph in Figure 2(d) can route $\pi_C=(a_C,b_C)$ in three steps by using one of the three paths between $a_C$ and $b_C$. Say, $a_C$ is routed to $b_C$ via $x$. Then it must be the case that vertex $x$ is not used to route any other pebbles. We say the vertex $x$ is \textit{owned} by the clause. Otherwise, it would be not possible to route $a_C$ to $b_C$ in three steps via $x$. We can interpret this as follows. A clause has a satisfying assignment iff its clause graph has a owned vertex.
	
	\subsubsection{Variable Gadget:} Construction of the variable gadgets is done in a similar manner. 
	The variable gadget $G_X$ corresponding to $X$ is shown in Figure 3(b).
	Figure 3(a) is essentially a smaller version of 4(b) and is easier to understand.
	If we choose to route $a_1$ and $b_1$ via the top-left path passing through $x_1$ and $u_1$ then $(a_2, b_2)$ must be routed via $x_2$ and $u_2$.
	This follows from the fact that since $u_1$ is occupied the pebbles in the diamond chain $C$ (the dashed line connecting $u_1$ with $u_3$) must use $u_3$ to route the right most pair.
	By symmetry, if we choose to route $(a_1, b_1)$ using the bottom right path (via $\neg x_1$, $u_2$) then we also have to choose the bottom right path for $(a_2,b_2)$.
	These two (and only two) possible (optimal) routing scheme can be interpreted as variable assignment.
	Let $G_X$ be the graph corresponding to the variable $X$ (Figure 3(b)).
	The top-left routing scheme leaves the vertices $\neg x_1, \neg x_2, \ldots $ free to be used for other purposes since they will not be able take part in routing pebbles in $G_X$.
	Thus this can be interpreted as setting the variable $X$ to false.
	This ``free" vertex can be used by a clause (if the clause has that literal) to route its own pebble pair.
	That is they can become owned vertices of some clause.
	Similarly, the bottom right routing scheme can be interpreted as setting $X$ to true.
	For each variable we shall have a separate graph and a corresponding permutation on its vertices.
	The permutation we will route on $G_X$ is $\pi_X = (a_1b_1)(a_2b_2)\ldots(a_{m_X},b_{m_X})\pi_{f_X}$.
	The permutation $\pi_{f_X}$ corresponds to the diamond chain connecting $u_1$ with $u_{m_X+1}$.
	The size of the graph $G_X$ is determined by $m_X$, the number of clauses the variable $X$ appears in.
	
	\subsubsection{Reduction:} For each clause $C$, if the literal $x \in C$  then we connect $x_i \in G_X$  (for some $i$) to the vertex labeled $x \in G_C$ via a diamond chain. If $\neg x \in C$ then we connect it with $\neg x_i$ via a diamond chain. This is our final graph $G_{\phi}$ corresponding to an instance of a 3-$ \mathsf{SAT} $ formula. The input permutation is $\pi = \pi_X\ldots\pi_C\ldots\pi_f\ldots$, which is the concatenation of all the individual permutations on the variable graphs, clause  graphs and the diamond chains.  This completes our construction. We need to show, $rt(G_\phi,\pi) = 3$ iff $\phi$ is satisfiable. Suppose $\phi$ is satisfiable. Then for each variable $X$, if the literal $x$ is true then we use bottom-right routing on $G_X$, otherwise we use top-left routing. This ensures in each clause graph there will be at least one owned vertex.  Now suppose $(G_{\phi},\pi) = 3$. Then each clause graph has at least one owned vertex. If $x$ is a free vertex in some clause graph then $\neg x$ is not a free vertex in any of the other clause graphs, otherwise variable graph $G_X$ will not be able route its own permutation in 3 steps. Hence the set of free vertices will be a satisfying assignment for $\phi$. It is an easy observation that the number of vertices in $G_\phi$ is polynomially bounded in $n,m$; the number of variables and clauses in $\phi$ respectively and that $G_\phi$ can be explicitly constructed in polynomial time.	
\end{proof}

\begin{corollary}
Computing $rt(G, \pi)$ remains hard even when $G$ is restricted to being 2-connected.
\end{corollary}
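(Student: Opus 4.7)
My plan is to argue that the graph $G_\phi$ built in the proof of the preceding theorem is either already $2$-connected or can be minimally modified to become so, without affecting the correctness of the reduction. The first step is to catalogue the cut vertices of $G_\phi$. The internal pieces are essentially $2$-connected on their own: the hexagon of Figure 2(c) is a $6$-cycle; each individual diamond in a diamond chain is a $4$-cycle; the clause gadget of Figure 2(d) consists of three internally disjoint paths between $a_C$ and $b_C$; and the variable gadget of Figure 3 is made of two spines joined by diamond chains. The potential cut vertices are therefore confined to the interfaces where an inter-gadget diamond chain meets a single literal vertex $x_i$ or $\neg x_i$ of a variable gadget, or the single literal vertex of a clause gadget.

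The second step is to patch each such interface. At every single-vertex interface, I would add one extra edge between a corner of the endpoint diamond and a neighbor of the interface vertex inside the host gadget. This replaces the cut-vertex attachment with a $4$-cycle and locally restores $2$-connectivity. Every added edge is incident only to vertices whose pebbles are fixed as $1$-cycles, and each patch is strictly local, contained in a neighborhood of constant size. Thus the patched graph $G'_\phi$, equipped with the extended permutation $\pi'$ that agrees with $\pi$ and fixes any newly introduced vertices, still has polynomial size in $\phi$ and is $2$-connected.

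The main obstacle is the ``only if'' direction of the reduction: the added edges must not create any new $3$-step routing of the critical $2$-cycles $(a_C, b_C)$, $(a_i, b_i)$, or the diamond-chain swaps. Because every new edge lies between $1$-cycle vertices inside a single existing gadget, no new inter-gadget shortcut is created. Any routing of $(a_C, b_C)$ still must traverse three edge-disjoint matchings along one of the three literal-paths of the clause gadget, exactly as in the original construction; the forced propagation through the variable gadget and along each diamond chain is likewise unchanged, since the rigid structure of Lemma 2.1 and of the atomic hexagon persists. The correctness argument of the theorem therefore carries over once these local invariants are verified, and determining $rt(G, \pi) \le 3$ remains $\mathsf{NP}$-complete under the additional restriction that $G$ be $2$-connected.
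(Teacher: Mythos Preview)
The paper states this corollary immediately after Theorem 2.1 with no separate argument; it treats the claim as an immediate consequence of the construction of $G_\phi$. Your patching strategy is therefore more explicit than anything the paper writes down, but the execution has a genuine gap.

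Your catalogue of cut vertices is incomplete. You argue that ``each individual diamond in a diamond chain is a $4$-cycle'' and then conclude that the only articulation points are the two literal-vertex interfaces at the ends of each inter-gadget chain. That inference does not follow: the \emph{shared} vertex between two consecutive diamonds inside an inter-gadget chain is also a potential cut vertex of $G_\phi$. Your patch adds a single edge at each endpoint of the chain, turning the endpoint attachment into a $4$-cycle, but it does nothing for the interior shared vertices. After your modification, removing the shared vertex $v_1$ between the first and second diamonds still separates the first diamond (now glued to the variable gadget) from the remainder of the chain (glued to the clause gadget); whether the graph stays connected then depends on the global variable--clause incidence structure of $\phi$, which you never constrain. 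So ``locally restores $2$-connectivity'' does not give global $2$-connectivity.

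There is also a smaller inconsistency in your invariant ``every added edge is incident only to vertices whose pebbles are fixed as $1$-cycles.'' In the paper's diamond the top and bottom vertices carry the $2$-cycle, and in the hexagon one neighbor of the literal vertex $x_i$ is $a_i$, which is a $2$-cycle endpoint. So ``a corner of the endpoint diamond'' and ``a neighbor of the interface vertex'' are not automatically $1$-cycle vertices; you would have to name the specific endpoints and then actually verify, case by case, that the new edge creates no alternative $3$-step routing for any nearby $2$-cycle. You flag this as the main obstacle but only assert the conclusion.

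A cleaner route, closer in spirit to the paper's one-line corollary, is to first normalize the 3-SAT instance so that its variable--clause incidence graph is $2$-connected (this preserves $\mathsf{NP}$-hardness) and then observe that every inter-gadget diamond chain lies on a global cycle of $G_\phi$; together with the internal $2$-connectivity of the variable and clause gadgets this yields $2$-connectivity of $G_\phi$ without adding any edges, so the routing analysis of Theorem 2.1 carries over verbatim.
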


\subsection{Connected Colored Partition Problem ($ \mathsf{CCPP} $)}
Our proof technique for Theorem 1 can be used to prove that the following problem is also $\mathsf{NP} $-hard.
Let $G$ be a graph whose vertices are colored  with $k$ colors. 
We say a partition $\mathcal{S}=\{S_1,\ldots, S_r\}$ of the the vertex set $V$ respects the coloring $C$ (where $C : V \to \{1,\ldots,k\}$) if each partition either contains all vertices of some color or none of the vertices of that color (necessarily $r \le k$). Further, we require the induced subgraph $G[S_i]$ be connected, for every $i$.
Given a graph $G$, a coloring $C$ (with $k$ colors) and a integer $t \le n$ the decision version of the problem asks, whether there exists a valid partitioning whose largest block has a size of at most $t$.
We denote this problem by $\mathsf{CCPP}(G, k, t)$. 
If we replace the requirement of connectedness of the induced subgraphs with other efficiently verifiable properties then it is a strict generalization of the better known monochromatic partitioning problems on colored graphs (see for example \cite{gyarfas2011partitioning}).
Note that the connectivity requirement on the induced subgraphs is what makes this problem graphical.
In fact without it the problem becomes trivial, as one can simply partition the vertices into monochromatic sets, which is the best possible outcome.
$\mathsf{CCPP}(G, k, t)$ is in $ \mathsf{P} $ if $t$ is constant. Since one can simply enumerate all partitions and there are $O(k^t)$ of them.

\begin{theorem}
	$\mathsf{CCPP}(G, k, t)$ is $ \mathsf{NP} $-complete for arbitrary $k$ and $t$. 
\end{theorem}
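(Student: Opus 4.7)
Membership in $\mathsf{NP}$ is immediate: the partition itself is a polynomially sized witness, and respecting the coloring, connectivity of each induced block, and the size bound $t$ can all be verified in polynomial time. For hardness the plan is a reduction from $3$-$\mathsf{SAT}$ that reuses the gadget framework of Theorem~1, translating the ``3-step routability'' condition into a ``bounded-size connected partition'' condition via a suitable coloring.

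I will start from the graph $G_\phi$ constructed in Theorem~1 and color it as follows. For every 2-cycle pair $(a_i,b_i)$ of $\pi$, both endpoints receive a single shared color unique to that pair; every remaining vertex (fixed 1-cycle vertices, literal vertices, the $u_i$ vertices, and interior vertices of diamond chains) receives its own singleton color. The block containing a pair is then forced to include at least the pair together with some path joining them in $G_\phi$, while each non-pair vertex is free either to form a singleton block or to be absorbed into a pair's block.

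The correspondence I plan to establish is that a 3-step routing of $\pi$ on $G_\phi$ exists if and only if there is a valid partition whose largest block has size at most a target $t$ determined by the gadget sizes. In the forward direction each cycle's routing path becomes a block, the ``ownership'' property of Theorem~1 translates directly into vertex-disjointness of these blocks, and unused vertices remain as singletons. In the reverse direction the block of a pair in a valid partition must be a small connected subgraph joining that pair, and the gadget geometry guarantees that any such small subgraph is one of the admissible 3-step routing subgraphs; the disjointness of blocks then rules out routing conflicts. Consequently $\phi$ is satisfiable iff the constructed $\mathsf{CCPP}$ instance answers YES.

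The main obstacle is that $\mathsf{CCPP}$ is polynomial-time solvable when $t$ is constant, so a verbatim translation of Theorem~1's constant-size atomic gadgets (which would give $t=O(1)$) cannot by itself witness hardness. The fix I envisage is to inflate the construction so that $t$ becomes polynomial in the input: lengthen each atomic length-$3$ path inside the clause and variable gadgets to a path of length $\Theta(n)$, and lengthen each diamond chain in proportion (the chain length was already noted to be a free parameter in Theorem~1). Every routing block then has size $\Theta(n)$, and setting $t$ equal to that inflated value keeps the reduction polynomial while preserving the two-way choice at each variable gadget, the three-way choice at each clause gadget, and the ownership mechanism that couples them, so the above equivalence continues to hold.
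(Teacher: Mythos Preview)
Your coloring idea—give each $2$-cycle pair a shared color and every other vertex a unique color—is exactly what the paper does. Where you diverge is in how the gadgets are adapted and in the target value of $t$.

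The paper does \emph{not} inflate anything. Instead it simplifies: the diamond chains are removed altogether and the literal vertices of a clause are \emph{fused} directly with literal vertices of the corresponding variable gadget (two per occurrence, which is why the variable hexagon chain is doubled to length $2m_X$, and the $u$-cycle is closed up so $u_1=u_{m_X+1}$). With these modifications every forced block—either the two-segment path through a hexagon or the path through one literal line of a clause—has exactly four vertices, and the reduction asks whether a partition with $t=4$ exists. So the paper's hardness proof actually uses a fixed $t$, in apparent tension with its own earlier remark that constant $t$ is easy; you took that remark at face value and designed around it, while the paper simply proceeds with $t=4$.

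Your inflation plan is plausible but is both more work and more delicate than the paper's route. Two points to watch if you pursue it. First, your intended bijection ``$3$-step routing scheme $\leftrightarrow$ valid partition'' is not literal: in a routing, distinct $2$-cycles may traverse the same vertex at different time steps (e.g.\ the shared $u_i$ between adjacent hexagons, or the shared apex between adjacent diamonds), whereas partition blocks must be vertex-disjoint. The paper sidesteps this by re-arguing the gadget logic directly in partition language rather than translating routings; you would need to do the same. Second, after inflating paths you must set a single $t$ that is simultaneously tight for hexagon blocks, clause blocks, and (if you keep them) diamond blocks, and you must check that no ``mixed'' connected set of size $\le t$ containing a pair exists other than the intended ones. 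The paper's fusion-and-$t=4$ approach makes that verification a one-line observation.
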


\begin{figure}[h]
	\includegraphics[width=3cm]{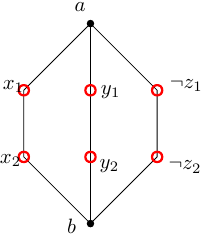}
	\centering
	\caption{A modified clause gadget (from the proof of Theorem 1) for the clause $C = x + y + \neg z$.} 
\end{figure}

\begin{proof}
	The proof essentially uses a similar set of gadgets as used in the proof of Theorem 1. The idea is to interpret a route as a connected partition. 
	We show that even when each color class is restricted to at most two vertices the problem remain $\mathsf{NP} $-complete. This done via a reduction from the 3-$\mathsf{SAT}$ problem. We reuse some of the gadgets from the proof of Theorem 1, but we interpret them differently. Lets discuss the clause gadgets first. In Figure 4 we see the graph corresponding to the clause $C = x + y + \neg z$. The vertices $a$ and $b$ have the same color, lets say $c$ so that they identify with the clause $C$. All the other vertices (shown as red circles) have unique colors which also differs from $c$. Clearly any valid partition of $C$ must include both pair of vertices along some line. Hence in an optimal partition of the clause graph the largest subgraph is of size 4.
	
	The variable gadgets are  same as before (see Figure 3) except we do away with the diamond chain and fuse the two end vertices together. So for example in Figure 3 (a), $u_1 = u_{3}$. We also make the gadgets twice as long, so instead of having ${m_X}$ hexagons we now have $2m_X$ number of them. The vertices $(a_i,b_i)$ have the same color but which is distinct from every other vertices. As with the clause graphs the red vertices in the variable gadgets all have unique colors.
	
	To construct the graph $G_\phi$ corresponding to a boolean formula $\phi$ we do the following. Since we are not using the diamond chains anymore we directly fuse vertices from clause graphs and variable graphs. So for example if the clause $C$ has variable $x$ as a true literal we fuse  two vertices $x_i$ and $x_j$ (for some $i,j \le 2m_X$) to the two vertices $x_1$ and $x_2$ in the clause graph. That is $x_1, x_2 \in \{x_i,x_j\}$.
	In every partition the pair $a_i, b_i$ must be included since they have the same color. Since each partition must be connected this can happen if either we take the segment $(a_i,x_i,u_i,b_i)$ or the segment $(a_i,u_{i+1 \mod 2m_X}, \neg x_{i+1 \mod 2m_x},b_i)$. Lets call them top-left and bottom-right segments.
	Clearly, if we take the top-left segment as part of some partition for any pair $(a_i,b_i)$ we have to use the corresponding top-left segments for all other $(a,b)$ pairs in the same variable graph. 
	Otherwise, the segments will not be connected.
	Same is true with the bottom-right segments. 
	This forces variable assignments. 
	If we choose the top left segments then the bottom-right vertices ($\neg x_i$'s) which corresponds to the negated literals will be free and a pair of them can be use to partition a clause graph which contains those literal vertices.
	If $G_{\phi}$ has a partitioning scheme such that every partition is of size 4 then $\phi$ is satisfiable. We can look at the partition to determine which segments were chosen from the variable graph which determines the variable assignment. Since every clause graph has been partitioned into components of size at most 4, we conclude that every clause is satisfied.
	The other direction can be proven in a similar manner. 
\end{proof}

\subsection{Routing As Best You Can}
It is often desirable to determine how many packets we can send to their destination within a certain number of steps.
Such as propagating information in social media.
In the context of permutation routing this leads to a notion of \textit{maximum routability}. 
Given two permutation $\pi$ and $\sigma$ let $|\pi - \sigma|$ be the number of fixed points in $\tau$ such that $\tau \pi = \sigma$.
Let us define \textit{maximum routability} $mr(G,\pi,k)$ as follows:
$$ mr(G,\pi,k) = \max_{\sigma \in S_n ,\ rt(G, \sigma) \le k}{|\pi-\sigma|}$$
We denote by $ \mathsf{MaxRoute} $ the problem of computing maximum routability.
\noindent Essentially, $\sigma$ is a permutation out of all permutations that can be routed in $\le k$ steps and that has the maximum number of elements in their correct position as given by $\pi$.
 The permutation $\sigma$ may not be unique.
  It can be easily shown (as a corollary to Theorem 1) that the decision version of this problem is $\mathsf{NP}$-hard, since we can determine $rt(G, \pi)$ by asking whether $mr(G,\pi, k) = n$. 
(Of course $rt(G, \pi) = O(n)$ for any graph, hence $O(\log n)$ number of different choices of $k$ is sufficient to compute  $rt(G, \pi)$ exactly.)

In this section we give an approximation algorithm for computing the maximum routability when the input graph $G$ satisfies the following restriction. 
If the maximum degree of $G$ is $\Delta$ such that $(\Delta + 1) ^ k = O(\log^2 n) $ then $mr(G, \pi, k)$ can be approximated within a factor of $O(n\log\log n/ \log n)$ from the optimal.  
Unfortunately a good approximation for $rt(G, \pi)$ does not lead to a good approximation ratio when computing $mr(G, \pi, k)$ for any $k > 2$. 
The reason being that in an optimal algorithm for routing $\pi$ on $G$ it is conceivable that all pebbles are displaced at the penultimate stage and the last matching fixes all the displaced pebbles. 

Our approximation algorithm is based on a reduction to the $\mathsf{MaxClique}$ problem.
The $ \mathsf{MaxClique} $ problem has been extensively studied. 
In fact it is one of the defining problems for $\mathsf{PCP}$-type systems of probabilistic verifiers \cite{feige1996interactive}. 
It has been shown that $ \mathsf{MaxClique} $ can not be approximated within a $n^{1-o(n)}$ factor of the optimal \cite{engebretsen2000clique}. 
The best known upper bound for the approximation ratio is by Feige \cite{feige2004approximating} of $O(n (\log \log n / \log^3 n) )$ which improves upon  Boppana and Halldorsson's \cite{boppana1992approximating} result of $O(n / \log^2 n)$. 
Note that if there is a $f(n)$-approximation for $ \mathsf{MaxClique} $ then whenever the clique number of the graph is $\omega(f(n))$, the approximation algorithm returns a non-trivial clique (not a singleton vertex).

\begin{theorem}
	Given a graph $G$ whose maximum degree is $\Delta$, in polynomial time we can construct another graph $G_{clique}$, with $|G_{clique}| = O(n(\Delta+1)^k)$, such that if the clique number of $G_{clique}$ is $\kappa$ then $mr(G, \pi, k) = \kappa$.
\end{theorem}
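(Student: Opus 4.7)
The plan is to encode candidate $k$-step trajectories of individual pebbles as vertices of $G_{clique}$ and compatibility between trajectories as edges, so that cliques correspond exactly to sets of pebbles that can be simultaneously delivered to their intended destinations within $k$ steps. Specifically, for each $v \in V(G)$ I would enumerate every walk $W_v = (v = a_0^v, a_1^v, \ldots, a_k^v = \pi(v))$ of length exactly $k$ in which each $a_{t+1}^v$ is either $a_t^v$ or a neighbor of $a_t^v$ in $G$; each such walk represents a possible way in which pebble $p_v$ could be moved to $\pi(v)$ over $k$ steps (shorter traversals are padded with stationary steps). Since at each step the pebble has at most $\Delta+1$ choices (stay, or move to one of its neighbors), there are at most $(\Delta+1)^k$ such walks per starting vertex, giving $|G_{clique}| \le n(\Delta+1)^k$.

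I would then join $W_v$ to $W_u$ in $G_{clique}$ iff at every time $t$ the positions satisfy $a_t^v \ne a_t^u$, and at every transition the movement edges $e_v^t = (a_t^v, a_{t+1}^v)$ and $e_u^t = (a_t^u, a_{t+1}^u)$ (where a stay contributes no edge) are either identical (a mutual swap) or vertex-disjoint. Each pairwise check takes $O(k)$ time, so $G_{clique}$ can be constructed in polynomial time. To argue that a clique $\mathcal{T} = \{W_{v_1},\ldots,W_{v_\kappa}\}$ yields a routing with all $\kappa$ pebbles arriving correctly, I would take the matching at step $t$ to be the union of the movement edges of the trajectories in $\mathcal{T}$. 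Pairwise compatibility ensures this really is a matching, and an induction on $t$ shows that each $p_{v_i}$ occupies $a_t^{v_i}$ at time $t$: whenever $p_{v_i}$ is swapped it is either against another $\mathcal{T}$-pebble whose trajectory prescribes the reverse swap, or against a non-$\mathcal{T}$ pebble which is harmlessly displaced. This produces a $\sigma$ with $rt(G,\sigma) \le k$ and $\sigma(v_i) = \pi(v_i)$ for all $i$, so $mr(G,\pi,k) \ge \kappa$.

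Conversely, the trajectories of the correctly-routed pebbles under any optimal routing realizing $mr(G,\pi,k)$ are pairwise compatible — they are jointly executed by a single matching sequence — and hence form a clique of that size in $G_{clique}$, giving $\kappa \ge mr(G,\pi,k)$. The main obstacle is the forward direction: pairwise compatibility must be strong enough to imply joint realizability. The delicate case is two trajectories whose movement edges share exactly one vertex without coinciding, which is neither a mutual swap nor vertex-disjoint; the adjacency rule must explicitly forbid this so that the matching constructed at each step really is a matching and no trajectory is overwritten by another's unintended swap. Non-trajectory pebbles cause no trouble since $mr$ places no constraint on where they end up.
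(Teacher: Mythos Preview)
Your proposal is correct and follows essentially the same construction as the paper: vertices of $G_{clique}$ are length-$k$ walks from $v$ to $\pi(v)$ in $G$ augmented with self-loops, and edges encode pairwise compatibility of walks. Your compatibility condition (movement edges at each step are identical or vertex-disjoint, plus distinct positions at every time) is equivalent to the paper's formulation ($W_i[t]\neq W_j[t]$ for all $t$, and $W_i[t+1]=W_j[t]\Leftrightarrow W_i[t]=W_j[t+1]$); in fact you go further than the paper by actually arguing why pairwise compatibility suffices for joint realizability of the matching sequence, a point the paper asserts without detail.
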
	

\noindent In the above theorem the graph $G_{clique}$ will be an $n$-partite graph. 
Hence $\kappa \le n = O(|G_{clique}|/(\Delta+1)^k)$.
As long as we have $(\Delta+1)^k = O(\log^2 n)$ we can use the approximation algorithm for $ \mathsf{MaxClique} $  to get a non-trivial approximation ratio of $O(n \log \log n / \log n)$.
	
\begin{proof}
	Here we give the reduction from $ \mathsf{MaxRoute} $ to $ \mathsf{MaxClique} $.
		First we augment $G$ by adding self-loops. Let this new graph be $G'$.
		Hence we can make every matching in $G'$ perfect by assuming each unmatched vertex is matched to itself.
	    Observe that any routing scheme on $G'$ induces a collection of walks for each pebble.
		This collection of walks are constrained as follows. 
		Let $W_i$ and $W_j$ corresponds to walks of pebbles starting at vertices $i$ and $j$ respectively. 
		Let $W_i[t]$ be the position of the pebble at time step $t$.
		They must satisfy the following two conditions:
		1)  $W_i[t] \neq W_j[t]$ for all $t \ge 0$.
		2)  $W_i[t+1] = W_j[t]$ iff $W_i[t] = W_j[t+1]$.
		Now consider two arbitrary walks in $G'$. 
		We call them compatible iff they satisfy the above two conditions. 
		We can check if two walks are compatible in linear time.
		
		Let $\mathcal{W}_i$ be the collection of all possible length $k$ walks starting from $i$ and ending at $\pi(i)$.
		Note that $|\mathcal{W}_i| = O((\Delta+1)^k)$.
		 For each $w \in \mathcal{W}_i$ we create a vertex in $G_{clique}$. Two vertices $u,v$ in $G_{clique}$ are adjacent if they do not come from the same collection ($u \in \mathcal{W}_i$ then $v \not \in \mathcal{W}_i$)  and $u$ and $v$ are compatible walks in $G'$. 
		Clearly, $G_{clique}$ is $n$-partite, where each collection of vertices from $\mathcal{W}_i$ forming a block.
		Furthermore, if $G_{clique}$ has a clique of size $\kappa$ then
		it must be the case that there are $\kappa$ mutually compatible walks in $G'$. 
		These walk determines a routing scheme (since they are compatible) that routes $\kappa$ pebbles to their destination.
		Now if $G_{clique}$ has a clique number $< \kappa$ then the largest collection of mutually compatible length $k$ walks must be $< \kappa$. Hence number of pebbles that can be routed to their destination in at most $k$ steps will be $< \kappa$. 
		
		In order to get a non-trivial approximation ratio we require that $(\Delta+1)^k = O(\log^2 n)$ which implies that the above reduction is polynomial in $n$. This completes the proof.

\end{proof}


\section{Structural Results On The Routing Number}

\subsection{An Upper Bound For $h$-connected Graphs}
It was shown in \cite{alon1994routing} that if a graph $G$ is $h$-connected then its routing number has a lower bound of $\Omega(n/h)$. This is easy to see since there exists $h$-connected graphs which have a balanced bipartition with respect to some cut-set of size $h$. For such a graph the permutation that routes every pebble from one partition to the other and vice-versa takes at least $\Omega(n/h)$ matchings. In this section we give an upper bound. Let $G_h$ be a induced connected subgraph of $G$ having $h$ vertices, we will show $rt(G) = O(n\ rt(G_h)/ h)$. Hence if $G$ has a $h$-clique then $rt(G) = O(n/h)$. In fact the result is more general. If $G_h$ is an induced subgraph with $\le h$ vertices such that $r = rt(G_h)/|G_h|$ is minimized then $rt(G) = O(nr)$.

We use the classical Lovasz-Gyori partition theorem for $h$-connected graphs for this purpose:

\begin{theorem}[Lovasz-Gyori]
	If $G$ is a $h$-connected graph then for any choice of positive numbers $n_1,\ldots,n_h$  with $n_1+\ldots+n_h=n$ and any set of vertices $v_1,\ldots,v_h$ there is a partition of the vertices $V_1,\ldots,V_h$ with $v_i \in V_i$ and $|V_i| = n_i$ such that the induced subgraph $G[V_i]$ is connected for all $1 \le i \le h$. 
\end{theorem}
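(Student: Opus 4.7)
The plan is to proceed by induction on the connectivity parameter $h$. For the base case $h=1$, $G$ is connected and there is a single part $V_1=V$ of size $n=n_1$ containing $v_1$, with $G[V_1]=G$ connected by hypothesis.

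For the inductive step, assume the result holds for every $(h-1)$-connected graph. Given an $h$-connected $G$, the strategy is to carve off a single part, say $V_h$ containing $v_h$ and of prescribed size $n_h$, in such a way that the residual graph $G - V_h$ is still $(h-1)$-connected and still contains $v_1,\ldots,v_{h-1}$; then the inductive hypothesis applied to $G - V_h$ with those $h-1$ distinguished vertices and the sizes $n_1,\ldots,n_{h-1}$ yields the remaining parts. The entire proof therefore reduces to the following key lemma: in any $h$-connected graph, for any specified vertex $v \notin \{v_1,\ldots,v_{h-1}\}$ and any target size $k$ with $1 \le k \le n-h+1$, there is a connected subset $S \ni v$, disjoint from $\{v_1,\ldots,v_{h-1}\}$, with $|S|=k$ and such that $G-S$ is $(h-1)$-connected.

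To establish the key lemma, I would try a greedy growing procedure: start with $S = \{v\}$ and repeatedly add a vertex $u \in N(S) \setminus \{v_1,\ldots,v_{h-1}\}$ chosen so that $G - (S \cup \{u\})$ remains $(h-1)$-connected. If this succeeds until $|S|=k$ we are done. Existence of an admissible $u$ at every stage is where the real work lies: one must argue that not every boundary vertex of $S$ can ``drop'' the connectivity of the complement below $h-1$, by exploiting the $h$-connectivity of $G$ and analyzing the structure of minimum vertex cuts of $G-S$. An arguably cleaner alternative is an exchange argument: start with an arbitrary initial partition satisfying $v_i \in V_i$ and $|V_i|=n_i$ (ignoring connectivity), define a potential $\Phi$ equal to the total number of connected components across the induced subgraphs $G[V_1],\ldots,G[V_h]$, and show that whenever some $G[V_i]$ is disconnected one can perform a boundary-vertex swap between two parts that strictly decreases $\Phi$; termination then forces every $G[V_i]$ connected.

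The hard part in both approaches is the same: in a plain $h$-connected graph (not necessarily $(h+1)$-connected or regular) minimum vertex cuts can cross one another in subtle ways, and controlling those interactions so that either the greedy growth step or the swap step is always available requires tools such as submodularity of the cut function and a Menger-type analysis of minimum separators. This is precisely why the Lovasz--Gyori theorem is considered deep and is invoked here as a black box rather than reproved from scratch.
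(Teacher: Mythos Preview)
The paper does not prove this theorem at all: it is stated as the classical Lov\'asz--Gy\H{o}ri partition theorem and used purely as a black box in the proof of the routing-number upper bound (Theorem~4). There is therefore no ``paper's own proof'' to compare your proposal against, and indeed your final paragraph already anticipates this.

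As for the sketch itself, the inductive reduction you outline is in the spirit of Gy\H{o}ri's original argument, but your key lemma as formulated is too strong to be directly provable by the greedy growth you describe: after removing a connected set $S$, the remainder $G-S$ need not stay $(h-1)$-connected, and there exist $h$-connected graphs where no admissible boundary vertex can be added without dropping the complement's connectivity below $h-1$ at intermediate stages. Gy\H{o}ri's actual proof does not maintain $(h-1)$-connectivity of the complement throughout; instead it works with a more delicate extremal choice (a ``cascade'' of nested subsets) and a careful case analysis, while Lov\'asz's independent proof is topological. Your exchange/potential argument is likewise not known to go through: a single boundary swap can easily increase the number of components elsewhere, so $\Phi$ is not guaranteed to decrease. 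In short, both of your proposed routes identify the right difficulty but do not overcome it, which is consistent with your own closing remark that the result is deep and is cited here without proof.
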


We prove a combinatorial result.
We have $a$ lists $L_i$, $1\le i\le a$, each of length $b$.
Each element of a list is a number $c$, $1 \le c\le a$.
Further, across all lists, each number $c$ occurs exactly $b$ times.

\begin{lemma}
	Given lists as described, there exists an $a\times b$ array $A$ such that the
	$i$th row  is a permutation of $L_i$ and each column is a permutation of $\{1,2,3,\ldots,a\}$.
\end{lemma}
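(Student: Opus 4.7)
The plan is to encode the combinatorial data as a bipartite multigraph and then invoke König's edge-coloring theorem. Build a bipartite multigraph $H=(R\cup C,E)$ with $R=\{r_1,\ldots,r_a\}$ (one vertex per list) and $C=\{c_1,\ldots,c_a\}$ (one vertex per possible value). For every occurrence of the value $c$ in the list $L_i$, add one edge between $r_i$ and $c_c$ (parallel edges are permitted).

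By hypothesis, $\deg_H(r_i)=|L_i|=b$ for every $i$, and $\deg_H(c_c)=b$ for every $c$ since each value appears exactly $b$ times across all lists. Hence $H$ is a $b$-regular bipartite multigraph. By König's edge-coloring theorem, its edge set decomposes into $b$ perfect matchings $M_1,\ldots,M_b$.

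Now fill the array $A$ column by column: for $j=1,\ldots,b$, let the entry $A[i,j]$ be the value $c$ such that the unique edge of $M_j$ incident to $r_i$ goes to $c_c$. This is well-defined because $M_j$ is a perfect matching and so contains exactly one edge at $r_i$ and exactly one edge at each $c_c$. The row condition holds because, as $j$ ranges over $1,\ldots,b$, the edges of $\bigcup_j M_j$ at $r_i$ are precisely the $b$ edges of $H$ incident to $r_i$, which by construction list (with multiplicity) exactly the elements of $L_i$; so row $i$ is a permutation of $L_i$. The column condition holds because $M_j$ meets each $c_c$ exactly once, so column $j$ contains each value in $\{1,\ldots,a\}$ exactly once.

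The only nontrivial ingredient is König's theorem, and the one point worth checking carefully is that it applies to bipartite multigraphs (it does, either via the standard proof by alternating paths or via a reduction to Hall's theorem on the regular bipartite graph). Everything else is bookkeeping, so I anticipate no real obstacle beyond phrasing the encoding cleanly.
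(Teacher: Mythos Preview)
Your proof is correct. The paper takes a slightly different (though closely related) route: it applies Hall's marriage theorem directly to the lists $L_1,\ldots,L_a$ to extract a single system of distinct representatives, uses that as the first column of $A$, deletes the chosen elements, and iterates on lists of length $b-1$. The Hall condition is verified by a counting argument (any $k$ lists together contain $kb$ entries, but each value appears only $b$ times, so at least $k$ distinct values must be present). Your approach via K\"onig's edge-coloring theorem on the $b$-regular bipartite multigraph produces all $b$ columns at once as the $b$ color classes, which is arguably cleaner and avoids the explicit induction; the paper's approach is perhaps more elementary in that it cites only Hall's theorem. The two arguments are essentially equivalent in strength, since K\"onig's theorem for regular bipartite multigraphs is itself commonly proved by iterating Hall.
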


\begin{proof}
	By Hall's Theorem for systems of distinct representatives \cite{hall1998combinatorial}, we know that we can choose a representative from each $L_i$ to form the first column of $A$.
	The criterion of Hall's Theorem is that, for any $k$, any set of $k$ lists have at least $k$ distinct
	numbers; but there are only $b$ of each number so $k-1$ numbers can not fill up $k$ lists.
	Now remove the representative from each each list, 
	and iterate on the collection of lists of length $b-1$.
\end{proof}

To prove our upper bound we need an additional lemma.

\begin{lemma}
	Given a set $S$ of $k$ pebbles and tree $T$ with $k$ pebbles on its $k$ vertices. 
	Suppose we are allowed an operation that replaces the pebble at the root of $T$ by a pebble from $S$.
	We can replace all the pebbles in $T$ with the pebbles from $S$ in $\Theta(k)$ steps, each a replace or a matching step.
\end{lemma}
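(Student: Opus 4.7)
The lower bound $\Omega(k)$ is immediate: each replace operation inserts and removes exactly one pebble at the root, so ejecting all $k$ original pebbles requires at least $k$ replace steps.

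For the $O(k)$ upper bound, I would exhibit an explicit pipelined schedule. Fix a DFS walk $W=(w_0=r,w_1,\ldots,w_{2(k-1)}=r)$ of $T$ rooted at the given root $r$; this walk traverses each edge of $T$ exactly twice and has total length $2(k-1)$. For each vertex $v$ define a \emph{timestamp} $\tau(v)$ equal to the last position at which $v$ appears in $W$, and inject the new pebbles from $S$ in decreasing order of $\tau$, so that the pebble destined for the latest-visited vertex is injected first and has the longest time to travel along the walk.

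The schedule proceeds in $O(k)$ super-steps. A super-step consists of a single replace at $r$ followed by a bounded number of matching steps that collectively advance each pebble currently in transit by one position along $W$ toward its destination. I would verify this first for a path $T=P_k$: the cyclic sequence (replace; match the odd edges $\{(v_1,v_2),(v_3,v_4),\ldots\}$; match the even edges $\{(v_2,v_3),(v_4,v_5),\ldots\}$) achieves a throughput of one replace per three steps, so $3k+O(1)$ steps suffice and match the lower bound up to constants. The same pattern generalizes to trees: the tree edges used by $W$ split into an ``away-from-root'' direction and a ``toward-root'' direction, each coverable by a constant number of matchings per super-step.

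The main obstacle is at branching vertices of $T$: the matching constraint forbids simultaneously sliding one pebble into one child of a vertex $v$ while sliding another pebble out of a different child of $v$ in the same step, so the naive ``everything shifts at once'' picture has to be locally serialized near branch points. I would overcome this by maintaining the invariant that after each super-step the set of vertices currently holding pebbles from $S$ forms a connected subtree of $T$ containing $r$ that grows by exactly one new vertex per super-step, while the old pebbles drain monotonically toward $r$ through the complementary region. Since this growing subtree reaches all $k$ vertices after $k$ super-steps and each super-step costs $O(1)$ replace or matching operations, the total schedule length is $O(k)$, matching the lower bound.
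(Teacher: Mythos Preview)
Your lower bound argument is fine. The upper bound has a real gap at exactly the point you flag. You correctly see that the DFS-walk picture fails at branch vertices, but your proposed repair --- the invariant that the $S$-occupied vertices form a connected subtree growing by one vertex per super-step --- does not by itself yield $O(1)$ matchings per super-step. After a replace, the new $S$-pebble sits at the root, while the vertex you want to annex to the subtree may be at distance $\Theta(k)$ from $r$; you have not said what constant number of matchings moves the boundary outward by one while keeping every pebble on a legal vertex. ``Shift the whole subtree outward by one'' is precisely the operation that matchings on a branching tree do \emph{not} support in $O(1)$ rounds, so the difficulty has been relocated rather than resolved. (The DFS-walk version has the additional problem that distinct walk positions can coincide as vertices, so ``advance every in-transit pebble one step along $W$'' is not even well defined once two pebbles occupy walk positions mapping to the same vertex.)

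The paper sidesteps the Euler tour entirely. It assigns destinations in reverse level order (deepest level first, root last) and makes every super-step exactly \emph{replace; odd-matching; even-matching}. The first matching moves each in-transit $S$-pebble from its current even level to the next odd level along its root-to-destination path; the second moves it from odd to even. These really are matchings: pebbles injected in different super-steps sit at distinct even levels (each has advanced two levels per super-step since injection), so the edges they traverse lie between different level pairs and are automatically vertex-disjoint in a tree. The reverse level ordering guarantees that a moving pebble never needs to pass through a vertex where an earlier $S$-pebble has already settled, since any settled pebble is at a level at least as deep as any later pebble's destination. This gives $3k+O(1)$ steps with no case analysis at branch vertices.
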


\begin{proof}
	Briefly, as each pebble comes from $S$ it is assigned a destination vertex in $T$, in reverse level order (the root is at level 0). 
	After a replace-root operation, there are two matching steps; these three will repeat.
	The first matching step uses disjoint edges to move elements of $S$ down to an odd level
	and the second matching step moves elements of $S$ down to an even level. 
	Each matching moves every pebble from $S$, that has not reached its destination, towards its destination.
	The new pebbles move without delay down their paths in this pipelined scheme.
	(The invariant is that each pebble from $S$ is either at its destination, or at an even level before the next replace-root operation.)
\end{proof}
\begin{theorem}
	If $G$ is $h$-connected and $G_h$ is an  induced connected subgraph of order $h$ then $rt(G) = O(n\ rt(G_h)/h)$.
\end{theorem}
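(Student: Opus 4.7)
The plan is to combine the Lovasz-Gyori partition theorem with Lemmas 4 and 5, using the induced subgraph $G_h$ (together with one vertex of each part of the partition) as a ``hub network.'' First, I would apply Theorem 3 to $G$, asking for parts of size $\lceil n/h \rceil$ (or equal size when $h$ divides $n$) and requiring the $h$ designated vertices to be exactly the vertices of $G_h$. This produces a partition $V_1,\ldots,V_h$ with each $G[V_i]$ connected and $v_i \in V_i$; I fix a spanning tree $T_i$ of $G[V_i]$ rooted at $v_i$. For each $i$, let $L_i$ be the length-$(n/h)$ list of destination-part indices of the pebbles currently in $V_i$; each value $j \in \{1,\ldots,h\}$ appears exactly $n/h$ times across these lists, so Lemma 4 yields an array $A$ whose $i$th row is a permutation of $L_i$ and whose columns are permutations of $\{1,\ldots,h\}$. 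The $t$th column encodes a permutation $\pi_t$ of the hub vertices that sends the pebble currently at $v_i$ to $v_{A[i][t]}$.

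Next, I would execute $n/h$ macro-rounds. In macro-round $t$: (a) the tree $T_i$ delivers to $v_i$ the particular pebble of $V_i$ tagged by row $i$ of $A$ for this round; (b) the $h$ pebbles now sitting at the hubs are routed via $G_h$ according to $\pi_t$, taking at most $rt(G_h)$ matching steps that are confined to $G_h$; (c) the pebble arriving at each $v_j$ is absorbed into $T_j$ at some vertex inside $V_j$ (not necessarily its final destination). Steps (a) and (c) are the time-reverse of Lemma 5 and Lemma 5 itself, respectively, so the total internal tree work per tree across all $n/h$ macro-rounds is $O(n/h)$; these matchings live on disjoint parts and avoid the hub edges, so they proceed in parallel with each other and with the hub phases. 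After this stage every pebble lies in its correct part $V_j$, so a final cleanup corrects the residual permutation inside each $G[V_j]$ via a spanning-tree route in another $O(n/h)$ steps, in parallel across parts. The total wall-clock time is $(n/h)\cdot rt(G_h) + O(n/h) = O\!\left(n\cdot rt(G_h)/h\right) = O(nr)$.

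The main obstacle is arranging the macro-rounds so that Lemma 5 and its time-reverse pipeline correctly at the shared hub vertex $v_i$, which is simultaneously the root of $T_i$ and a working vertex for the hub routing. During each hub phase the root is locked by the hub routing, so $T_i$ performs only internal matchings strictly below $v_i$ during those $rt(G_h)$ steps. However, each hub phase automatically delivers a new pebble to $v_i$ and removes the old one, which is exactly the ``replace at the root'' event that both pipelines require. A small bookkeeping step, using the remaining freedom in the row-$i$ permutation of $A$ to align the emission order from $T_i$ with the reverse level order used in Lemma 5, is enough to guarantee that the single replace event per macro-round keeps both the extraction and absorption pipelines on schedule, so the internal tree work fits comfortably within the $O(n/h)$ budget per tree and the whole scheme meets the claimed $O(nr)$ bound.
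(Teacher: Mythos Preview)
Your proposal is correct and follows essentially the same approach as the paper: apply Lov\'asz--Gy\H{o}ri with the vertices of $G_h$ as the designated $v_i$'s, use the SDR/array lemma to schedule $n/h$ hub permutations, pipeline the pebbles into and out of each tree via the ``replace-root'' lemma, and finish with an $O(n/h)$ cleanup inside each part. The paper organizes the middle phase slightly more simply---each macro-round is just $rt(G_h)$ hub steps followed by two tree-matching steps, for $rt(G_h)+2$ steps per round---rather than trying to overlap tree work with hub work as you do, but this is only a presentational difference and does not affect the bound or the argument.
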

\begin{proof}
	Let $V_h = \{u_1,\ldots,u_h\}$ be the vertices in $G_h$. We take these vertices as the set of $k$ vertices in Theorem 2. We call them ports as they will be used to route pebbles between different components. Without loss of generality we can assume $p = n/h$ is an integer. Let $n_1=n_2=\ldots=n_h=p$ and $V_i$ be the block of the partition such that $u_i \in V_i$. Let $H_i = G[V_i]$. Then for any permutation $\pi$ on $G$:
	\begin{enumerate}
		\item Route the pebbles in $H_i$ according to some permutation $\pi_i$. Since $H_i$ has $n/h$ vertices and is connected it takes $O(n/h)$ matchings.
		\item Next use $G_h$, $n/h$ times, to route pebbles between different partitions. We show that this can be done in $O(n\ rt(G_h)/h)$ matchings. (The ``replace-root'' step of Lemma 4, is actually the root replacements done by routing on $G_h$.)
		\item Finally, route the pebbles in each $H_i$ in parallel. Like step 1, this also can be accomplished in $O(n/h)$ matchings.
	\end{enumerate}
	Clearly the two most important thing to attend to in the above procedure are the permutations  in step 1 and the routing scheme of step 2.
	We can assume that each $H_i$ is a tree rooted at $u_i$ (since each $H_i$ has a spanning tree). Thus the decomposition looks like the one shown in Figure 5.

	\begin{figure}[h]
		\includegraphics[width=5cm]{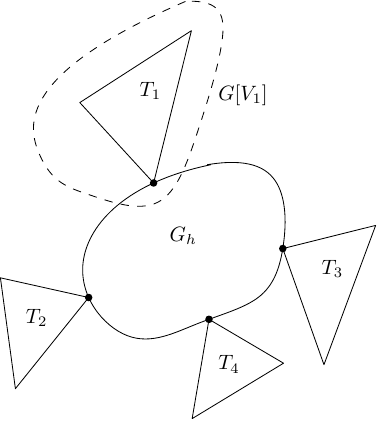}
		\centering
		\caption{$G$ is decomposed into 4 connected blocks, which are connected to each other via $G_h$.} 
	\end{figure}
	
	The permutation $\pi$ on $G$ indicates for each element of $H_i$, which $H_j$ it wants to be routed to, where $j$ could be $i$.
	So each $H_i$ can build a list $L_i$ of indices of the ports of $G_h$ that it wants to route its 
	elements to  (again, possibly to its own port).
	The lists satisfy the conditions of Lemma 3, with $a=h$ and $b=n/h$,
	We will use the columns of the array $A$ to specify the permutations routed using $G_h$ in step 2.
	Note that step 1 will need to preprocess each $H_i$ so that the algorithm of Lemma 4
will automatically deliver the elements of $H_i$ up to $u_i$ in the order specified by the $i$th row of $A$.
	
	Once the pebbles are rearranged in step 1, we use the graph $G_h$ to route them to their destination components. Each such routing takes $rt(G_h)$ steps. Between these routings on $G_h$ the incoming pebble at any of the port vertices is replaced by the next pebble to be ported; this requires 2 matching steps as seen in Lemma 4. Hence, after $rt(G_h) + 2$ steps a set of $h$ pebbles are routed to their destination components. This immediately gives the bound of the theorem. 
\end{proof}

\subsection{Relation Between Clique Number and Routing Number}
\begin{theorem}
	For a connected graph $G$ with clique number $\kappa$ its routing number is bounded by $O(n-\kappa)$. 
\end{theorem}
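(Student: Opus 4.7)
The plan is to mirror the three-phase pipeline of Theorem 2.5, with the maximum clique $K=\{u_1,\dots,u_\kappa\}$ serving as the hub in place of the Lovász--Győri partition. Because $G$ is connected, a simultaneous breadth-first search from $K$ in $G$ produces a spanning subgraph $H$ consisting of the clique together with a forest $F$ whose components $T_1,\dots,T_\kappa$ are trees rooted at $u_1,\dots,u_\kappa$, with $\{V(T_i)\}$ partitioning $V(G)$. The crucial size bound is $\sum_i(|T_i|-1)=n-\kappa$, which gives $\max_i|T_i|\leq n-\kappa+1$.

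Given a permutation $\pi$, I would route it using $H$ in three phases. In Phase 1, run the tree routing algorithm of Zhang et al.\ in parallel on every $T_i$ to rearrange the pebbles so that they will emerge at the root $u_i$ in a prescribed order; this costs $O(\max_i|T_i|)=O(n-\kappa)$ matchings. In Phase 2, execute $R=\max_i|T_i|$ ``hub rounds''. Each round first performs the two-matching routing of $K_\kappa$ on the pebbles currently sitting at the roots and then uses $O(1)$ matchings to push each arriving pebble one edge into its tree while pulling the next outgoing pebble up to the root, exactly in the style of Lemma 4's replace-root pipeline. Phase 2 thus costs $O(R)=O(n-\kappa)$ matchings overall. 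Phase 3 mirrors Phase 1, routing inside each $T_i$ in parallel to place the arriving pebbles at their correct destinations in another $O(n-\kappa)$ matchings. Summing the three phases yields $rt(G,\pi)=O(n-\kappa)$.

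The main obstacle is guaranteeing that each hub round corresponds to a genuine permutation of $K$, so that the two-step clique routing is applicable. For each $v\in T_i$ let $\phi(v)$ denote the unique $j$ with $\pi(v)\in T_j$, and form the list $L_i=(\phi(v):v\in T_i)$; then symbol $j$ appears $|T_j|$ times across the lists, while $L_i$ itself has length $|T_i|$. Pad every $L_i$ up to the common length $R$ with ``stay at $u_i$'' entries and simultaneously top up each symbol $j$ to exactly $R$ occurrences; this rebalancing is feasible because $\sum_i(R-|T_i|)=\kappa R-n=\sum_j(R-|T_j|)$. Lemma 3 then yields a $\kappa\times R$ array $A$ whose rows are the padded lists and whose columns are permutations of $\{1,\dots,\kappa\}$, and the $r$-th column specifies the clique permutation of round $r$. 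Phase 1 is scheduled so that the pebble presented at $u_i$ in round $r$ is the one whose target tree equals $A[i,r]$; self-destined pebbles (those with $\phi(v)=i$) never cross the clique and are absorbed into the tree rearrangements of Phases 1 and 3 without changing the asymptotic bound.
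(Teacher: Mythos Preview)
Your argument is correct but takes a genuinely different route from the paper. The paper contracts the clique $K$ to a single super-vertex $v$, takes a spanning tree $T$ of the contracted graph (so $|T|=n-\kappa+1$), and routes on $T$ while charging a constant-factor penalty whenever $v$ is matched, to account for first shuffling the correct pebble inside $K$ to the boundary; a preliminary phase evicts the foreign pebbles from $K$ using the acquaintance-time bound of Benjamini--Shinkar--Tsur, and then one invocation of Zhang's tree-routing algorithm finishes. You instead re-run the hub-and-spoke pipeline of the $h$-connected theorem with $K$ as the hub, observing that Lov\'asz--Gy\H{o}ri is unnecessary here because a BFS forest rooted at $K$ already yields connected blocks with $\max_i|T_i|\le n-\kappa+1$; the unequal block sizes force the padding step before invoking Lemma~3, and padding $L_i$ with the symbol $i$ indeed balances both list lengths and symbol multiplicities simultaneously. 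Your approach is more modular, reuses Lemmas~3--4 wholesale, and avoids the external acquaintance-time citation; the paper's contraction argument is shorter, produces explicit constants, and sidesteps the Hall-type scheduling entirely.
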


\begin{proof}
	Let $H$ be a clique in $G$ of size $\kappa$. Let $G_{\setminus H}$ be the minor of $G$ after the contraction of the subgraph $H$. Let the vertex that $H$ has been contracted to be $v$. Further, let $T$ be a spanning tree of $G_{\setminus H}$. When routing on $G_{\setminus H}$ we can treat $v$ as any other vertex of $G_{\setminus H}$. Taking into account the fact that $v$ can store more than one pebble internally. 
	When $v$ participates in a matching with some other vertex $u$ in $G_{\setminus H}$  we assume that exchanging pebbles takes 3 steps.
This accounts for the fact that the pebble thats need to be swapped with the pebbles at $u$ was not on a vertex adjacent to $u$ in the un-contracted graph $G$.
  The basic idea is to break the routing into two steps. 
  In the first step we simply move all pebbles in $v$ whose final detination is not in $v$ (i.e. not in un-contracted $H$) out.
  For a tree, it is known that \cite{benjamini2014acquaintance} we can route a subset of $p$ pebbles where each pebble needs to be moved at most $l$ distance in $\le p + 2l$ steps.
  Since $T$ has a diameter at most $n-\kappa$ and at most $min(\kappa, n-\kappa)$ pebbles need to be moved out of $v$ the first step can be accomplished $\le 3(n -\kappa) + O(1)$ steps. 
    At this point we can employ any optimal tree routing algorithm on $T$ where we charge 3 time units whenever $v$ is part of the matching to route all the pebbles in $G_{\setminus H}$.
  If we use the algorithm presented in \cite{zhang1999optimal} then we see that the routing takes at most $15/2(n-\kappa) + o(n)$ steps for any permutation.
  
\end{proof}


%

\end{document}